\newtheorem{theorem}{Theorem}
\newtheorem{lemma}{Lemma}
\newtheorem{definition}{Definition}
\newtheorem{corollary}{Corollary}
\newcommand{\peel}[2]{\mathcal P(#1;#2)}
\newcommand{\dil}{\mathsf{dil}}
\newcommand{\len}{\ell}
\newcommand{\remove}[1]{}
\begin{document}

\title{Randomly removing  $g$ handles at once}


\author{Glencora Borradaile \\
Oregon State University
\and
James R.~Lee \\ University of Washington
\and
Anastasios Sidiropoulos \\
Toyota Technological Institute at Chicago}

\date{}

\maketitle

\begin{abstract}
  Indyk and Sidiropoulos (2007) proved that any orientable graph of
  genus $g$ can be probabilistically embedded into a graph of genus
  $g-1$ with constant distortion.  Viewing a graph of genus $g$ as
  embedded on the surface of a sphere with $g$ handles attached,
  Indyk and Sidiropoulos' method gives an embedding into a
  distribution over planar graphs with distortion $2^{O(g)}$, by
  iteratively removing the handles.  By removing
  all $g$ handles at once, we present a probabilistic embedding with
  distortion $O(g^2)$ for both orientable and non-orientable
  graphs. Our result is obtained by showing that the minimum-cut graph
  of Erickson and Har~Peled (2004) has low dilation, and then randomly
  cutting this graph out of the surface using the Peeling Lemma of Lee
  and Sidiropoulos (2009).
\end{abstract}

\section{Introduction}

Planar graphs constitute an important class of combinatorial
structures, since they can often be used to model
a wide variety of natural objects.
At the same time, they have
properties that give rise to improved algorithmic solutions for
numerous graph problems, if one restricts the set of possible inputs
to planar graphs (see, for example, applications to maximum
independent set \cite{Baker-planar} and computing maximum flows \cite{Borradaile-thesis}).

One natural generalization of planarity involves the genus of a graph.  Informally, a graph has genus $g$, for some
$g\geq 0$, if it can be drawn without any crossings on the surface of
a sphere with $g$ additional handles (see Section \ref{sec:prelims}).  For example, a planar graph has genus $0$, and a
graph that can be drawn on a torus has genus at most $1$.

In a way, the genus of a graph quantifies how far a graph is from
being planar.  Because of their similarities to planar graphs, graphs
of small genus usually exhibit nice algorithmic properties.  More
precisely, algorithms for planar graphs can usually be extended to
graphs of bounded genus, with a small loss in efficiency
or quality of the solution (e.g.~\cite{BDT09}). Unfortunately,
many such extensions are complicated and based on ad-hoc techniques.

Inspired by Bartal's probabilistic approximation of general metrics by
trees \cite{Bar96}, Sidiropoulos and Indyk~\cite{IS07} showed that every metric on
a graph of genus $g$ can be probabilistically approximated by a planar
graph metric with distortion at most exponential in $g$.
(See Section \ref{sec:prelims} for a formal definition of
probabilistic embeddings: randomized mappings between spaces
preserving distances in expectation.)  Since the distortion
measures the ability of the probabilistic mapping to preserve
metric properties of the original space, it is desirable to
make this quantity as small as possible.  In the present paper, we
show that the dependence of the distortion on the genus can be made
significantly smaller: $O(g^2)$ for graphs of orientable or
non-orientable genus $g$.  This requires a fundamental change over the
approach of \cite{IS07} which repeatedly reduces the genus by one
until the input graph is planar.

\medskip
\noindent {\bf Removing all the handles at once.}  Since (randomly)
removing handles one at a time incurs an exponential loss in
distortion, we look for a way to remove all the handles at once.

Our starting point is the {\em minimum-length cut graph} of Erickson
and Har-Peled \cite{EP-cutting}.  Given a graph $G$ the minimum-length
cut graph is (roughly speaking) a minimum-length subgraph $H$ of $G$
such that $G \setminus H$ is planar.  In Section \ref{sec:dilation},
we show that this $H$ is nearly {\em geodesically closed} in that
$d_H(u,v) \approx d_G(u,v)$ for all $u,v \in V(H)$, where $d_H$ and
$d_G$ are the shortest-path metrics on $H$ and $G$, respectively.
However, simply removing $H$ from $G$ could result in unbounded distortion for
some pairs of vertices of $G$. The geodesic-closure property suggests
that if we could {\em randomly shift} $H$, then the distortion of all
pairs of vertices in $G$ would be fine in expectation.

We use the {\em Peeling Lemma} of Lee and Sidiropoulos~\cite{LS08} to perform the random
shifting (Section~\ref{sec:peeling}).  The Peeling Lemma allows one to
randomly embed $G$ into a graph consisting of copies of $G \setminus
H$ hanging off an isomorphic copy of $H$, while keeping the expected
distortion of pairs of vertices in $G$ small.  The lemma requires an
appropriate random partition of the
shortest-path metric on $G$.  Such a procedure is provided
by the fundamental result of Klein, Plotkin, and Rao \cite{KPR93} for
partitioning graphs excluding a fixed minor.

This completes the proof, except that $H$ itself might not be planar.
However, $H$ does have small Euler number\footnote{The Euler
number we refer to exclusively in this paper is the value
$|E|-|V|+1$.}
and so admits a probabilistic embedding into a distribution over
trees~\cite{GNRS-journal,FRT-journal}.  In Section \ref{sec:embed}, we combine
these ingredients to provide a probabilistic embedding with
distortion $O(g^2$).

\medskip

In Section \ref{sec:lower}, we show that any such probabilistic
embedding incurs at least $\Omega(\log g)$ distortion. (A lower bound
of $\Omega(\log g/\log \log g)$ was given by Indyk and Sidiropoulos
in~\cite{IS07}.)  This still leaves an exponential gap between our
upper ($O(g^2)$) and lower ($\Omega(\log g)$) bounds.  We study the
limitations of our particular techniques and show an $\Omega(g)$ lower
bound for a restricted class of approaches.

\subsection{Preliminaries}
\label{sec:prelims}

Throughout the paper, we consider graphs $G=(V,E)$
with a non-negative length function $\len : E \to \mathbb R$.
We refer to these  as {\em metric graphs}. For pairs of vertices
$u,v \in V$, we
denote the length of the shortest path between $u$ and $v$ in $G$,
with the lengths of edges given by $\len$, by $d_G(u,v)$.
Unless otherwise stated, we restrict our attention
to finite graphs.

\paragraph{Graphs on surfaces}
Let us recall some notions from topological graph theory (an in-depth
exposition is provided by Mohar and Thomassen~\cite{MoharT-book}).  A \emph{surface} is a
compact connected 2-dimensional manifold, without boundary.
For a graph $G$ we can
define a one-dimensional simplicial complex $C$ associated with $G$ as
follows: The $0$-cells of $C$ are the vertices of $G$, and for each
edge $\{u,v\}$ of $G$, there is a $1$-cell in $C$ connecting $u$ and
$v$.  An \emph{embedding} of $G$ on a surface $S$ is a continuous injection
$f:C\rightarrow S$.  The \emph{orientable genus} of a graph $G$ is the smallest integer
$g\geq 0$ such that $C$ can be embedded into a sphere with $g$ handles.
The \emph{non-orientable genus} of $G$ is the smallest integer $k\geq 0$ such that $G$ can be embedded into a sphere with $k$ disjoint caps replaced by
disjoint M\"obius bands (also known as {\em cross caps}).
Note that a graph of genus $0$ is a planar graph.

\paragraph{Metric embeddings}
A mapping $F : X \to Y$ between two metric spaces $(X,d)$ and $(Y,d_Y)$
is {\em non-contracting} if $d_Y(F(x),F(y)) \geq d(x,y)$ for all $x,y \in X$.
If $(X,d)$ is any finite metric space, and $\mathcal Y$
is a family of finite metric spaces, we say that {\em $(X,d)$ admits
a stochastic $D$-embedding into $\mathcal Y$} if there exists
a random metric space $(Y,d_Y) \in \mathcal Y$ and a random
non-contracting mapping $F : X \to Y$ such that for every $x,y \in X$,
\begin{equation}
\label{eq:expansion}
\mathbb E\left[\vphantom{\bigoplus} d_Y(F(x),F(y))\right] \leq D \cdot d(x,y).
\end{equation}

The infimal constant $D$ such that \eqref{eq:expansion} holds is the {\em distortion of
the random mapping $F$.}
For two families $\mathcal F$ and $\mathcal G$ of metric graphs, we write $\mathcal F \rightsquigarrow \mathcal G$
if there exists an $D \geq 1$ such that every metric graph in $\mathcal F$ admits
a stochastic $D$-embedding into the family of metric graphs $\mathcal G$.  We will
write $\mathcal F \stackrel{D}{\rightsquigarrow} \mathcal G$ if we wish to emphasize the
particular constant.
A detailed exposition of  results
on metric embeddings can be found in \cite{I-survey} and \cite{Matousek-book}.

\section{Random planarization}

We show that every metric graph  of orientable or non-orientable genus $g$
embeds into a distribution over planar graph metrics with distortion at most $O(g^2)$.

\subsection{The peeling lemma}
\label{sec:peeling}

In this section, we review the Peeling Lemma from \cite{LS08}.
Let $G=(V,E)$ be a metric graph, and consider any subset $A \subseteq V$.
Let $G[A]$ denote the subgraph of $G$ induced by $A$, and let
$d_{G[A]}$ denote the induced shortest-path metric on $A$.
The {\em dilation of $A$ in $G$} is
$$
\dil_G(A) = \max_{x \neq y \in A} \frac{d_{G[A]}(x,y)}{d_G(x,y)}.
$$
Since $d_{G[A]}(x,y) \geq d_G(x,y)$ for all $x,y \in A$, $\dil_G(A) \geq 1$.

We now recall the following definition.
\begin{definition}[Lipschitz random partition]
\label{def:lrp}
For a partition $P$ of a set $X$, we write $P : X \to 2^X$ to denote the map
which sends $x$ to the set $P(x) \in P$ which contains $x$.
A partition $P$ of a metric space $X$ is $\Delta$-bounded
if every subset in $P$ has diameter at most $\Delta$.
A random partition $P$ of a
finite metric space $X$ is $\Delta$-bounded if the distribution
of $P$ is supported only on $\Delta$-bounded partition.

A random partition $P$ is $(\beta,\Delta)$-Lipschitz if it is $\Delta$-bounded, and
for every $x, y \in X$,
$$
\Pr[P(x) \neq P(y)] \leq \beta \frac{d(x,y)}{\Delta}.
$$
\end{definition}

For a metric space $(X,d)$, we write $\beta_{(X,d)}$ for the infimal $\beta$ such that $X$
admits a $(\beta,\Delta)$--Lipschitz random partition for every $\Delta > 0$,
and we refer to $\beta_{(X,d)}$ as the {\em decomposability modulus of $X$.}

The results of Rao \cite{Rao99} and Klein, Plotkin, and Rao \cite{KPR93}
yield the following for the special case of bounded-genus metrics.
(The stated quantitative dependence is due to \cite{FT03}; see also
\cite[Cor. 3.15]{LNinvent}.)

\begin{theorem}[KPR Decomposition]
\label{thm:kpr}
If $G=(V,E)$ is a metric graph of orientable or non-orientable genus $g \geq 0$,
then $\beta_{(V,d_G)} = O(g+1)$.
\end{theorem}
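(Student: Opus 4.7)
The plan is to combine two classical ingredients: a topological fact bounding the minors of bounded-genus graphs, and the Klein--Plotkin--Rao random partition theorem for graphs that exclude a fixed minor, with the quantitative improvement of Fakcharoenphol--Talwar cited in the statement.

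First I would observe that a graph of orientable or non-orientable genus at most $g$ cannot contain $K_h$ as a minor once $h$ is too large. Indeed, if $K_h$ were a minor of such a graph, then $K_h$ itself would embed in a surface of (orientable or non-orientable) genus $g$, and a standard face-counting/Euler-formula argument forces $h = O(\sqrt{g+1})$. Thus every genus-$g$ metric graph is $K_h$-minor-free for some $h = h(g) = O(\sqrt{g+1})$.

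Next I would invoke the KPR/FT theorem: every $K_h$-minor-free metric graph admits, for every $\Delta > 0$, a $(O(h^2), \Delta)$-Lipschitz random partition in the sense of Definition~\ref{def:lrp}. Substituting the bound $h = O(\sqrt{g+1})$ then yields $\beta_{(V,d_G)} = O(h^2) = O(g+1)$, which is exactly the claim. The proof of the KPR/FT theorem itself proceeds by iterating a ``ball-partition'' subroutine $h$ times: each round fixes a root in each current piece, picks a uniformly random radius in an interval of length $\Theta(\Delta/h)$, and chops the piece into concentric shells of width $\Theta(\Delta/h)$ around the root. A structural lemma shows that after $h$ rounds no piece can still contain a $K_h$-minor model, so the resulting pieces are essentially forests; each round contributes $O(1/h)\cdot d(x,y)/(\Delta/h) = O(d(x,y)/\Delta)$ to the separation probability, and composing across the $h$ rounds gives the $O(h^2)$ bound.

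The main obstacle, conceptually, is the structural lemma inside KPR/FT --- tracking how each ball decomposition strictly shrinks the maximum ``depth'' of any remaining $K_h$-minor model. This is the combinatorial core of the argument and the step most sensitive to the constants; everything else is bookkeeping. Since we are invoking KPR/FT as a black box (as the paper does), the only genuinely new work on our end is the minor-exclusion bound $h=O(\sqrt{g+1})$ from the Euler formula, which is a one-paragraph topological computation and is orthogonal to the random partition machinery.
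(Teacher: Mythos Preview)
The paper does not give its own proof of this theorem; it is quoted as a known result of Rao and Klein--Plotkin--Rao, with the $O(g+1)$ quantitative dependence attributed to Fakcharoenphol--Talwar (and a pointer to \cite{LNinvent}). Your proposal is exactly the standard derivation that underlies those citations: a genus-$g$ graph excludes $K_h$ as a minor for $h = O(\sqrt{g+1})$ (by Euler's formula applied to $K_h$ on a genus-$g$ surface), and KPR with the FT improvement gives $\beta = O(h^2)$ for $K_h$-minor-free metrics, whence $\beta = O(g+1)$. So your approach is not different from the paper's --- it simply unpacks what the paper invokes as a black box.

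One arithmetic slip in your optional sketch of the KPR/FT internals: you write that each of the $h$ rounds contributes separation probability $O(1/h)\cdot d(x,y)/(\Delta/h) = O(d(x,y)/\Delta)$, which over $h$ rounds would yield $\beta = O(h)$, not $O(h^2)$. The factor $O(1/h)$ is spurious: a uniformly random radius in an interval of length $\Theta(\Delta/h)$ separates $x,y$ with probability $\Theta(h\, d(x,y)/\Delta)$, and summing over $h$ rounds gives the $O(h^2)$ bound you state. Since you ultimately use KPR/FT as a black box, this does not affect the correctness of your overall argument, but the informal sketch as written is internally inconsistent.
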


The dilation and modulus is used in the statement of the Peeling
Lemma. We use $G \stackrel{D}{\rightsquigarrow} H$ to denote the fact
that $G$ admits a stochastic $D$-embedding into the family $\{H\}$
(consisting only of the single graph $H$).

Let $G=(V,E)$ be any metric graph, and $A \subseteq V$.
We write $\peel{G}{A}$ for the graph defined as follows.
For each $a \in A$, we let $G_a$ be a disjoint isometric copy
of $G[(V \setminus A) \cup \{a\}]$.  Now, $\peel{G}{A}$
is the metric graph which arises by starting with the metric
graph $G[A]$ and identifying each node $a \in V(G[A])$ with
the corresponding node $a \in V(G_a)$.
Finally, for $a \in A$, we use $\bar a$ to denote
the corresponding copy of $a$ in $\peel{G}{A}$.

\begin{lemma}[Peeling Lemma, \cite{LS08}]
Let $G=(V,E)$ be a metric graph, and $A \subseteq V$ an arbitrary subset of vertices.
Let $G'=(V,E')$ be the metric graph with $E' = E \setminus E(G[A])$, and
let $\beta = \beta_{(V, d_{G'})}$ be the corresponding modulus of decomposability.
Then there is a stochastic $D$-embedding $F : V(G) \to V(\peel{G}{A})$, with
$D = O(\beta \cdot \dil_G(A))$.
Furthermore, for every $x,y \in A$, we have
\begin{equation}\label{eq:dilate}
d_{\peel{G}{A}}(F(x),F(y)) \leq \dil_G(A)\, d_G(x,y),
\end{equation}
and $F(a)=\bar a$ for all $a \in A$.
\end{lemma}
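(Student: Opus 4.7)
The plan is to build the random non-contracting map $F$ by selecting, for each $v \in V \setminus A$, a random anchor $\pi(v) \in A$ and setting $F(v) = v^{(\pi(v))}$, the copy of $v$ that lives inside $G_{\pi(v)}$; for $a \in A$ we put $F(a) = \bar a$. The two deterministic assertions of the lemma---that $F$ is non-contracting, and that pairs in $A$ expand by at most $\dil_G(A)$---are essentially built into $\peel{G}{A}$: every path in $\peel{G}{A}$ projects to a walk of equal length in $G$, so $d_{\peel{G}{A}}(F(x), F(y)) \geq d_G(x, y)$ automatically, and for $x, y \in A$ the only route between $\bar x$ and $\bar y$ that does not wastefully detour into some $G_a$ lies inside the copy of $G[A]$, giving $d_{\peel{G}{A}}(F(x), F(y)) = d_{G[A]}(x, y) \leq \dil_G(A) \cdot d_G(x, y)$ and hence \eqref{eq:dilate}.

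The real content is bounding the expected image distance when at least one endpoint lies outside $A$. I would draw a hierarchical sequence of $(\beta, 2^i)$-Lipschitz random partitions $P_i$ of $(V, d_{G'})$ at dyadic scales, invoking the hypothesis $\beta = \beta_{(V, d_{G'})}$, and define $\pi(v)$ from the smallest $i$ for which $P_i(v) \cap A \neq \emptyset$ (with any fixed tie-breaking rule when several choices are available). Fix $x, y$ with $\rho = d_G(x, y)$. When $\pi(x) = \pi(y) = a$, the image distance is $d_{G_a}(x, y)$, which I would bound by twice the diameter of the joint cluster that first captures $x$, $y$, and $a$. When $\pi(x) \neq \pi(y)$, the shortest image path must cross the core, contributing $d_{G_{\pi(x)}}(x, \pi(x)) + d_{G[A]}(\pi(x), \pi(y)) + d_{G_{\pi(y)}}(\pi(y), y)$; the flanking terms are again controlled by the relevant cluster diameters, while the middle term is at most $\dil_G(A) \cdot d_G(\pi(x), \pi(y))$, which by the triangle inequality is at most $\dil_G(A)$ times the sum of the same cluster diameters plus $\rho$.

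Combining these with $\Pr[P_i(x) \neq P_i(y)] \leq \beta \rho / 2^i$ and summing the scale-$2^i$ contribution $O(\dil_G(A) \cdot 2^i) \cdot \beta \rho / 2^i$ should give $O(\beta \cdot \dil_G(A)) \cdot \rho$ in expectation, provided the sum effectively localizes to a single critical scale. That localization is the step I expect to require the most care: a naive sum over all dyadic scales would introduce a spurious logarithmic factor in the aspect ratio, so the analysis has to charge each pair $x, y$ only to the scale at which $\pi$ is first determined, in the spirit of the Bartal and Calinescu--Karloff--Rabani random partition analyses. A secondary subtlety is that $d_{G_a}(\cdot, \cdot)$ is measured in $G[(V \setminus A) \cup \{a\}]$, which is strictly smaller than $G'$ because a path through $G_a$ may only visit $A$ at the single portal $a$; ensuring that a $d_{G'}$-cluster actually certifies a short $v$-to-$\pi(v)$ path of this restricted form is where I expect the argument to need a clever trick.
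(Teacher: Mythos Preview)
The paper does not actually prove this lemma; it is quoted from \cite{LS08}, and the surrounding text offers only an informal illustration in the special case $A=Q$ with $\dil_G(A)=1$, pointing the reader to \cite{LS08} for details. So there is no ``paper's own proof'' to compare your proposal against beyond that sketch.

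That said, your outline is faithful to the mechanism the paper describes: map $A$ identically to the core, send each $v\notin A$ to a random copy $G_{\pi(v)}$, and choose $\pi$ via a multi-scale family of $(\beta,\Delta)$-Lipschitz partitions of $(V,d_{G'})$. Your verification of the two deterministic claims (non-contraction via the natural length-preserving projection $\peel{G}{A}\to G$, and \eqref{eq:dilate} via $d_{\peel{G}{A}}(\bar x,\bar y)=d_{G[A]}(x,y)$ because each $G_a$ hangs off a single cut vertex) is correct.

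You have also put your finger on exactly the two places where the argument is nontrivial. First, the ``single critical scale'' issue: one must avoid a $\log(\mathrm{aspect})$ factor, and the way to do this is to charge the pair $(x,y)$ only to the scale at which $\pi(x)$ and $\pi(y)$ are decided, using a truncation/padding argument rather than a union bound over all scales; this is precisely the CKR-style analysis you allude to. Second, the metric mismatch: a $d_{G'}$-short path from $v$ to some $a\in A$ may pass through other vertices of $A$, so it does not directly bound $d_{G_{\pi(v)}}(v,\pi(v))$, since $G_{\pi(v)}$ contains only the single portal $\pi(v)$. Your instinct that this needs an extra idea is right; the fix in \cite{LS08} is essentially to let $\pi(v)$ be the \emph{closest} vertex of $A$ to $v$ in $d_{G'}$ (or the first vertex of $A$ on a $G'$-shortest path), so that the initial segment $v\to\pi(v)$ is genuinely $A$-avoiding and hence lives in $G_{\pi(v)}$, and then to control the core term $d_{G[A]}(\pi(x),\pi(y))$ using $\dil_G(A)$ together with the cluster diameters. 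With that choice of anchor, your decomposition of the image distance and the telescoping over scales go through to give $O(\beta\cdot\dil_G(A))$.
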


The idea of the Peeling Lemma is perhaps most easily illustrated
by the following model situation.  Let $G=(V,E)$ be a planar graph,
and consider an arbitrary subset $Q \subset V$ of the vertices.
One constructs the quotient graph $G/Q$ by identifying all the vertices
in $Q$ and taking the induced shortest-path metric.  Metrically,
the same
construction can be obtained by putting a clique on the vertices
of $Q$, and making all the edge lengths of the clique zero.
To formally apply the Peeling Lemma, we take the latter viewpoint,
so that $V(G)=V(G/Q)$.

Let $Q = \{q_1, q_2, \ldots, q_k\}$ and consider the graph $\peel{G}{Q}$
which, being a 1-sum of planar graphs, is itself planar.
The Peeling Lemma allows us to show that $G/Q \rightsquigarrow \peel{G}{Q}$, i.e.
the quotient metric on $G/Q$ can be probabilistically embedded
into a distribution over planar graphs (by applying it with $A=Q$).

Such an embedding of $G/Q$ into $Q$
can be described by a random mapping $f : V \to Q$ specifying
which copy of $G_{q_i}$ in $\peel{G}{Q}$ each vertex is mapped to,
with the constraint that $f(q)=q$ for every $q \in Q$,
i.e. $f|_Q$ is the identity.

In order for the random mapping to have low expansion
in expectation, it should be the case that nearby vertices $x,y \in V$
are often mapped to the same $G_{q_i}$, i.e. that $f(x)=f(y)$
with high probability.
This is where Lipschitz random partitions (Definition \ref{def:lrp})
come into play.  They allow us to partition the metric space
$(V,d_{G/Q})$ into disjoint pieces and then map all the vertices
in a single piece to the same layer of $\peel{G}{Q}$.  The Peeling Lemma is proved
by using such random partitions at many different scales (i.e.
many different values of the diameter parameter $\Delta$).
For more information, we refer to \cite{LS08}.

In the applications of the present paper, $G$ will be a graph of small genus,
and $A$ will be a set such that the removal of $G[A]$ leaves behind a planar graph.
Such sets are the topic of the next section.

\remove{
The following corollary is immediate from Theorem \ref{thm:kpr} and the Peeling Lemma, recalling
the fact that the family of planar graphs is closed under 1-sums.

\begin{corollary}\label{cor:peelgenus}
Let $G=(V,E)$ be a metric graph of orientable (resp., non-orientable) genus $g \geq 1$, and let $A \subseteq V$
be a subset of vertices such that the graphs $G[A]$ and $\displaystyle \left\{\vphantom{\bigoplus} G[V \setminus A \cup \{a\}]\right\}_{a \in A}$
are planar.  Then $G$ admits a stochastic $D$-embedding
into a distribution over planar graphs, with $D = O(g \cdot \dil_G(A))$.
\end{corollary}
}



\subsection{Low-dilation planarizing sets}
\label{sec:dilation}

In light of the Peeling Lemma, given a graph of bounded genus,
we would like to find a low-dilation set $A$ whose removal leaves behind
planar components.  In section~\ref{sec:embed}, we will deal with the fact
that the $G[A]$ might not be planar.  In everything that follows,
$\mathbb S$ will denote some compact surface of bounded (orientable or non-orientable)
genus.

\begin{definition}[Cut graph \cite{EP-cutting}]
Let $G$ be a graph embedded in $\mathbb S$.  Then, a subgraph $H$ of
$G$ is called a \emph{cut graph} if cutting $\mathbb S$ along the
image of $H$ results in a space homeomorphic to the disk.
\end{definition}

\noindent To {\em cut $\mathbb S$ along} an edge means to cut the surface (as
with a pair of scissors along that edge, creating two copies of it and
a hole.

\begin{definition}[One-sided walk]
Let $D$ be the disk obtained by a cut graph.  Every edge of $H$
appears twice in the boundary of $D$.  Let $x$ and $y$ be two
distinct vertices on the boundary of $D$. Let $R$ and $R'$ be the
paths bounding $D$ between $x$ and $y$.  An $x$-to-$y$ walk $X$ is
called {\em one-sided} if for every edge $e$ of $X$, $e$ is in $R$
and $e'$ (the copy of $e$) is in $R'$.
\end{definition}

\begin{lemma} \label{lem:one-sided-walks}
For any cut graph $H$ and any two vertices $x,y$ of $H$, there is a
one-sided walk from $x$ to $y$ in $H$.
\end{lemma}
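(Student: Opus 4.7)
The plan is a parity argument: project the boundary arc $R$ to a walk in $H$, recognize it as an Eulerian trail in a carefully chosen multigraph, and conclude via the handshake lemma. Begin by partitioning the edges of $H$ according to where their two copies along $\partial D$ land: let $A$, $B$, $C$ be the sets of edges of $H$ whose two copies both lie on $R$, both lie on $R'$, and one on each, respectively. By definition a walk in $H$ is one-sided exactly when every edge it uses belongs to $C$, so the task reduces to showing that $x$ and $y$, viewed as vertices of $H$, lie in the same connected component of the subgraph $H_C \subseteq H$ induced by the $C$-edges.

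The key construction is to project $R$ itself. Replacing each edge of $\partial D$ with the underlying edge of $H$ turns the arc $R$ into a walk $\pi(R)$ in $H$ from $x$ to $y$. Because every edge of $H$ appears in $\partial D$ exactly twice (the defining property of a cut graph), $\pi(R)$ uses each $C$-edge exactly once, each $A$-edge exactly twice, and no $B$-edge. Let $H^+$ be the multigraph obtained from $H$ by doubling every $A$-edge and deleting every $B$-edge; then $\pi(R)$ lifts to a bona fide Eulerian trail in $H^+$ from $x$ to $y$.

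Next, I would compare degrees modulo $2$ in $H^+$ and in $H_C$. At any vertex $v$, each doubled $A$-edge contributes an even amount to $\deg_{H^+}(v)$ (loop or not), while each $C$-edge contributes $1$ if it is a non-loop incident to $v$ and $2$ if it is a loop at $v$---the same as its contribution in $H_C$. Hence $\deg_{H^+}(v) \equiv \deg_{H_C}(v) \pmod 2$ for every $v$. Since $H^+$ admits an Eulerian trail from $x$ to $y$, its odd-degree vertices---and therefore those of $H_C$---are precisely $\{x,y\}$ when $x \ne y$ as $H$-vertices, and none otherwise. In the degenerate case $x=y$ the empty walk is vacuously one-sided; otherwise, the handshake lemma applied component-wise to $H_C$ forces $x$ and $y$ to lie in the same component, and any walk between them in $H_C$ is a one-sided walk in $H$.

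The main point requiring care is the translation between the $\partial D$ picture and the graph $H$: a single vertex of $H$ may appear at several corners along $\partial D$, and verifying that $\pi(R)$ is a genuine Eulerian trail in $H^+$ rests squarely on the cut-graph property that every edge of $H$ occurs in $\partial D$ exactly twice. Loops must also be tracked (each contributing $2$ to the degree), and the case where $x$ and $y$ are distinct boundary corners of the same $H$-vertex must be noted; both are absorbed cleanly by the parity calculation and the empty-walk case above.
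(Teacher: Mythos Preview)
Your proof is correct, and it takes a genuinely different route from the paper's.

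The paper argues topologically: with the same trichotomy of edges (though with different letter assignments), it glues the disk $D$ along all edges both of whose copies lie in $R$, producing a punctured surface; since $R'$ is never touched by these gluings, $R'$ remains on the boundary of a single puncture, and the complement of $R'$ on that boundary is the desired one-sided walk. Your argument, by contrast, never leaves graph theory: you project $R$ to a walk in $H$, observe that it is an Eulerian trail in the multigraph $H^+$ obtained by doubling the $A$-edges and deleting the $B$-edges, and then use the handshake lemma to force $x$ and $y$ into the same component of $H_C$. The parity step---that $\deg_{H^+}(v)\equiv\deg_{H_C}(v)\pmod 2$ because doubled edges contribute evenly---is the clean replacement for the paper's gluing.

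Both proofs ultimately show that $x$ and $y$ are joined by a walk using only edges with one copy on each side, which is exactly what Lemma~\ref{lemma:dilation-planarizing-upper} needs. Your approach is more elementary (no surface manipulation), handles the degenerate case $x=y$ in $H$ explicitly, and is arguably easier to verify; the paper's approach is more geometric and gives the one-sided walk as an explicit sub-walk of the boundary rather than via an existence argument.
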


\begin{proof}
Let $R$ and $R'$ be the $x$-to-$y$ boundaries of the disk resulting
from cutting the surface along $H$.  Let $C$ be the set of edges both of
whose copies are in $R$.  Let $A$ be the subgraph of $H$ containing
all the edges in $R$:  $A$ contains an $x$-to-$y$ path.  Let $B$ be
the subgraph of $H$ containing all the edges in $R\setminus C$.  We
prove $B$ contains an $x$-to-$y$ path.

Let $e$ be any edge in $C$ and let $e'$ be its copy.  Removing $e$
and $e'$ from $R$ is equivalent to glueing the disk along $e$ and
$e'$, creating a punctured surface.  Since edges of $R'$ are never
glued together, they will remain on the boundary of a common
puncture.  Let $S$ be the boundary of this puncture after glueing
all the edges of $C$ together.  Since $R'$ is an $x$-to-$y$ walk, $S
\setminus R'$ is an $x$-to-$y$ walk.  All the edges in $S\setminus
R'$ are in $R$ and their copies are in $R'$ by construction.
Therefore $S \setminus R'$ is a one-sided walk.
\end{proof}

Now we are ready to bound the dilation of minimum-length cut graphs.

\begin{lemma}\label{lemma:dilation-planarizing-upper}
Let $G$ be a graph embedded in $\mathbb S$.  If $H$ is a cut graph
of $G$ of minimum total length and $h$ is the number of vertices of
degree at least 3 in $H$, then $$\dil_G(V(H)) \leq h+2.$$
\end{lemma}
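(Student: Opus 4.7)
Fix $x,y \in V(H)$ and let $P$ be a shortest $x$-to-$y$ path in $G$ with length $L = d_G(x,y)$. Since $H$ is a subgraph of $G[V(H)]$, we have $d_{G[V(H)]}(x,y) \leq d_H(x,y)$, so it is enough to exhibit a walk from $x$ to $y$ in $H$ of length at most $(h+2)L$. The strategy is to work with the one-sided walk furnished by Lemma~\ref{lem:one-sided-walks}, then use the minimality of $H$ as an exchange lever.

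Concretely, I would proceed in three steps. First, apply Lemma~\ref{lem:one-sided-walks} to obtain a one-sided walk $X$ from $x$ to $y$ in $H$; this is the natural topological candidate because its edges are exactly those of $H$ whose two copies lie on opposite sides of the cut disk $D$, one in $R$ and the other in $R'$. Second, reduce to the \emph{skeleton} of $H$: after shortcutting backtracking, $X$ passes straight through every degree-$2$ vertex and can only turn at the $h$ branch vertices of $H$, so it decomposes into at most $h+1$ maximal arc-traversals; if $x$ or $y$ lies in the interior of an arc, that arc is split into two segments, raising the total to at most $h+2$. Third — and this is where the minimality of $H$ is used — argue that we may choose $X$ so that each arc-segment has length at most $L$: otherwise, a segment $\alpha$ of $X$ with $\ell(\alpha) > L$ could be replaced by a detour built from $P$, yielding a strictly shorter cut graph $H'$ and contradicting the minimality of $H$. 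Summing over the at most $h+2$ segments then gives $\ell(X) \leq (h+2)L$.

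\textbf{Main obstacle.} The delicate point is the exchange in the third step: one must verify that replacing an arc-segment of the one-sided walk by a portion of the $G$-geodesic $P$ preserves the cut-graph property. Because $\alpha$ is one-sided, each of its edges appears exactly once in $R$ and once in $R'$, so removing $\alpha$ from $H$ amounts to re-gluing two matched pieces of $\partial D$; the one-sidedness is precisely what lets one check, after inserting $P$ in place of $\alpha$, that the complement $\mathbb S \setminus H'$ is again a disk rather than a more complicated surface. Handling the case where $P$ meets other arcs of $H$ requires a further local modification of $H'$ to eliminate spurious intersections without increasing total length; this topological bookkeeping, rather than any metric inequality, is the heart of the argument, and the extra additive~$2$ in $h+2$ absorbs the case that $x$ or $y$ is internal to an arc and contributes an extra segment to the decomposition.
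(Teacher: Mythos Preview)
Your plan follows the same skeleton as the paper's proof: take a one-sided $x$--$y$ walk in $H$, pass to a simple path, decompose it into at most $h+1$ arc-segments at the branch vertices, and use an exchange against the minimality of $H$ to bound the length of a segment by $\ell(P)$. The paper runs this as a proof by contradiction (if $d_H(x,y)>(h+2)\,d_G(x,y)$ then the longest of the $\le h+1$ segments exceeds $\ell(P)$, and swapping it for $P$ gives a shorter cut graph), but the content is the same.

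The real gap is the point you yourself flag as the ``main obstacle.'' The cut-and-glue step --- cut the disk $D$ along $P$, then re-glue along the one-sided segment $\alpha$ --- only yields a disk when $P$ is a single arc with interior in the open disk and endpoints on $\partial D$. If $P$ touches $H$ at interior vertices, then inside $D$ the geodesic breaks into several arcs, cutting along it no longer produces two disks separated by $R$ and $R'$, and there is no evident ``local modification'' that repairs $(H\setminus\alpha)\cup P$ into a cut graph without possibly increasing its length. Your sentence about eliminating spurious intersections is not an argument; it is precisely the missing idea.

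The paper's remedy is short and you should use it: argue by contradiction, and among all pairs $x,y\in V(H)$ with $d_H(x,y)>(h+2)\,d_G(x,y)$ choose one that \emph{minimizes} $d_G(x,y)$. If the $G$-geodesic $Q$ met $H$ at an interior vertex $z$, then from $d_H(x,z)+d_H(z,y)\ge d_H(x,y)>(h+2)\bigl(d_G(x,z)+d_G(z,y)\bigr)$ one of the pairs $(x,z),(z,y)$ would also be bad with strictly smaller $d_G$-distance. Hence $Q$ is internally disjoint from $H$; its endpoints pick out specific copies of $x,y$ on $\partial D$, which are then the correct inputs to Lemma~\ref{lem:one-sided-walks}, and the topological exchange goes through cleanly.

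A minor remark: once you extract a simple path from the one-sided walk, the segment count is at most $h+1$, not $h+2$; the extra unit in the bound $h+2$ is the slack the paper uses to turn the averaging inequality $\ell(P_j)\ge \ell(J)/(h+1)$ into a strict $\ell(P_j)>\ell(Q)$.
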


\begin{proof}
Assume for the sake of contradiction that there exist $x,y\in V(H)$,
with
$$d_H(x,y)> (h+2)\, d_G(x,y),$$ and pick $x,y$ so that $d_G(x,y)$ is
minimized among such pairs.  Let $Q$ be a shortest path between $x$
and $y$ in $G$.  Observe that by the choice of $x,y$, the path $Q$
intersects $H$ only at $x$ and $y$.  Let $D$ be the disk obtained
after cutting $\mathbb S$ along $H$.  Since $Q\cap H=\{x,y\}$, it
follows that the interior of $Q$ is contained in the interior of $D$
(Figure \ref{fig:cutgraph}).

Let $R,R'$ be the paths in $D$ between the end-points of $Q$,
obtained by traversing clockwise the boundary of $D$ starting from
$x$, and $y$ respectively (Figure \ref{fig:cutgraph}).  Let $S$ be
the one-sided walk that is contained in $R$, as guaranteed by
Lemma~\ref{lem:one-sided-walks}.  Let $J$ be any simple $x$-to-$y$
path contained in $S$.  We have
\begin{equation}\label{eq:J_Q}
  \len(J) \geq d_H(x,y) > (h+2)\, d_G(x,y) = (h+2) \len(Q)
\end{equation}

Let $U$ be the set of vertices of $H$ of degree at least $3$.  Since
$J$ is a simple path, it visits each vertex in $U$ at most once.  It
follows that the path $J$ consists of at most $k\leq h+1$ paths
$P_1,\ldots,P_k$ in $H$.  Since $J$ is a one-sided path, exactly one
copy of $P_i$ appears in $R$, for each $i\in [k]$.  Let
$j=\mbox{argmax}_{i\in [k]}\, \len(P_i)$.  By \eqref{eq:J_Q}, we
have
\begin{equation}\label{eq:Pj_small}
  \len(P_j) \geq \frac{\len(J)}{h+1} > \frac{h+2}{h+1} \len(Q) > \len(Q).
\end{equation}
Since $P_j$ has exactly one copy in each of $R$, and $R'$, it
follows that by cutting $D$ along $Q$ and gluing back along $P_j$ we
end up with a space homeomorphic to a disk (Figure
\ref{fig:cutgraph2}).  Therefore, the graph $H'$, obtained from $H$
by removing $P_j$ and by adding $Q$, is a cut graph.  By
\eqref{eq:Pj_small}, $H'$ has smaller total length than $H$,
contradicting the minimality of $H$, and concluding the proof.
\end{proof}

\begin{figure}
\begin{center}
\subfigure[The disk $\mathbb S \setminus \bar{H}$ with $R$, $R'$ on its boundary.]{
\scalebox{0.9}{\includegraphics{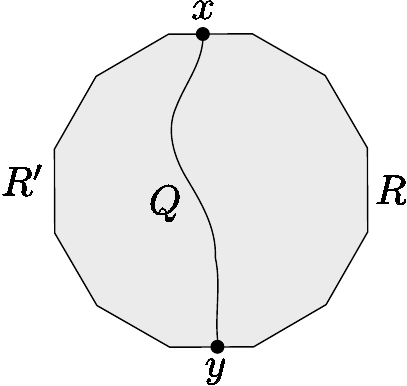}}
\label{fig:cutgraph}
}
\hspace{1cm}
\subfigure[Cutting along $Q$ and glueing along $P_j$.\label{fig:cutgraph2}]{
\scalebox{0.9}{\includegraphics{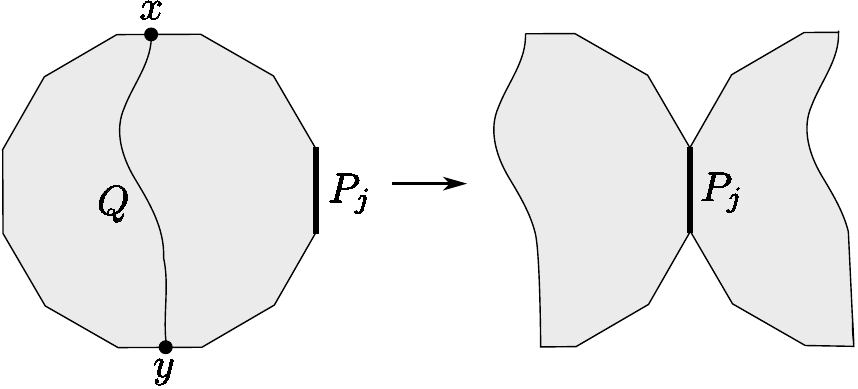}}
}
\caption{Building a smaller cut graph.}
\end{center}
\end{figure}

Now we state a result of \cite{EP-cutting}.

\begin{theorem}[Minimum-length cut graph,~\cite{EP-cutting}]\label{thm:EPcutting}
Let $G$ be a graph embedded on a surface $\mathbb S$ of orientable or non-orientable genus $g \geq 1$.
Then every minimal-length cut graph $H$ in $G$ has at most $4g-2$ vertices of degree at least 3.
Moreover, $H$ is the subdivision of some graph with at most $4g-2$ vertices and at most $6g-3$ edges.
\end{theorem}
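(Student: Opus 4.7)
The plan is to analyze the structure of a minimum-length cut graph by applying Euler's formula to the 2-cell decomposition of $\mathbb{S}$ induced by $H$, after first reducing to the case where $H$ has no degree-$1$ vertices.

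First, I would argue that a minimum-length cut graph $H$ contains no leaves. If $v$ were a degree-1 vertex attached via an edge $e$, then cutting $\mathbb{S}$ along $H \setminus e$ would be equivalent to cutting along $H$ and then regluing the two boundary copies of $e$ in the resulting disk. Regluing the two sides of a slit in a disk produces again a disk, now with $v$ as an interior point. Hence $H \setminus e$ is still a cut graph and strictly shorter (in the degenerate zero-length case, we may still pass to this shorter $H$), contradicting minimality. Isolated vertices may likewise be discarded, so every vertex of $H$ has degree at least 2.

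Next, I would let $H'$ be the graph obtained from $H$ by suppressing every degree-$2$ vertex, concatenating its two incident edges into a single edge. Then $H$ is a subdivision of $H'$, the vertices of $H$ of degree at least $3$ correspond precisely to $V(H')$, and cutting $\mathbb{S}$ along $H$ versus $H'$ yields homeomorphic disks. By construction every vertex of $H'$ has degree $\geq 3$, so the handshake inequality gives $2|E(H')| \geq 3|V(H')|$, i.e.\ $|V(H')| \leq \tfrac{2}{3}|E(H')|$.

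Now I would apply Euler's formula to the CW-decomposition of $\mathbb{S}$ whose $0$-cells are $V(H')$, whose $1$-cells are $E(H')$, and whose unique $2$-cell is the disk obtained by cutting. This yields
$$|V(H')| - |E(H')| + 1 \;=\; \chi(\mathbb{S}).$$
Since $\chi(\mathbb{S}) \geq 2-2g$ (with equality in the orientable case, and $\chi = 2-g \geq 2 - 2g$ in the non-orientable case), we get $|E(H')| - |V(H')| \leq 2g-1$. Combined with $|V(H')| \leq \tfrac{2}{3}|E(H')|$, this gives $|E(H')|/3 \leq 2g - 1$, so $|E(H')| \leq 6g-3$ and hence $|V(H')| \leq 4g-2$, which is exactly the desired conclusion.

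The main obstacle is the first step: verifying rigorously that contracting a leaf of a cut graph preserves the property of being a cut graph. This requires a careful topological argument about how cutting and gluing interact with the disk structure; once that is in hand, the remainder is a straightforward Euler-characteristic/degree-counting argument.
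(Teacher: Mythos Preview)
The paper does not prove this theorem; it is quoted from Erickson and Har-Peled \cite{EP-cutting} and used as a black box. Your Euler-characteristic argument is the standard one and is essentially correct: once $H$ has minimum degree $\geq 2$, the suppressed graph $H'$ together with the single disk face gives a CW-decomposition of $\mathbb{S}$, so $|V(H')|-|E(H')|+1=\chi(\mathbb{S})\geq 2-2g$, and combining with $3|V(H')|\leq 2|E(H')|$ yields the bounds.

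One small point you should patch: the suppression step assumes $H$ has at least one vertex of degree $\geq 3$. If every vertex of $H$ has degree exactly $2$, then $H$ is a single cycle, and your handshake inequality $3|V(H')|\leq 2|E(H')|$ fails for the one-vertex-one-loop graph $H'$. This case actually occurs (e.g.\ a non-contractible loop on the projective plane, $g=1$ non-orientable), but the conclusion is then trivial: $H$ has zero vertices of degree $\geq 3$, and $H'$ has $1\leq 4g-2$ vertices and $1\leq 6g-3$ edges. Handle it separately and your argument is complete.
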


\subsection{Applying the Peeling Lemma}
\label{sec:embed}

Given Lemma \ref{lemma:dilation-planarizing-upper} and Theorem \ref{thm:EPcutting}, we are in position to apply the Peeling Lemma,
except that the cut graph $H$ of Theorem \ref{thm:EPcutting} might not itself be planar.

\begin{theorem}
Let $G$ be any metric graph of orientable or non-orientable genus $g \geq 1$.  Then $G$
admits a stochastic $O(g^2)$-embedding into a distribution over planar graph metrics.
\end{theorem}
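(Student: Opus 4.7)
The plan is to apply the Peeling Lemma with $A = V(H)$, where $H$ is a minimum-length cut graph of $G$. By Theorem~\ref{thm:EPcutting}, $H$ has at most $4g-2$ vertices of degree at least $3$, so Lemma~\ref{lemma:dilation-planarizing-upper} bounds $\dil_G(V(H)) \leq (4g-2) + 2 = 4g$.

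Next, I apply the Peeling Lemma with this $A$. The graph $G' = (V, E \setminus E(G[A]))$ is obtained from $G$ by edge deletion and so still has (orientable or non-orientable) genus at most $g$; Theorem~\ref{thm:kpr} therefore gives $\beta_{(V,d_{G'})} = O(g+1) = O(g)$. The Peeling Lemma then produces a stochastic embedding $F : V(G) \to V(\peel{G}{A})$ with distortion $O(\beta \cdot \dil_G(A)) = O(g^2)$. Observe that $\peel{G}{A}$ is the $1$-sum of a copy of $G[A] = H$ with the pieces $G_a = G[(V \setminus A) \cup \{a\}]$, glued at each $\bar a$. Each $G_a$ is planar, since cutting $\mathbb{S}$ along $H$ yields a disk in which $G_a$ embeds.

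The remaining obstacle is that the $H$-copy inside $\peel{G}{A}$ may itself fail to be planar. By Theorem~\ref{thm:EPcutting}, however, $H$ is a subdivision of a graph on $O(g)$ vertices and $O(g)$ edges, so $H$ has Euler number $O(g)$ and only $O(g)$ essential branch points. Invoking an FRT-type stochastic tree embedding~\cite{FRT-journal}, together with the structural bounds of~\cite{GNRS-journal} applied to the reduced metric on the branch vertices, yields a stochastic embedding of $H$ into a distribution over tree metrics. Substituting the resulting tree for the $H$-copy inside $\peel{G}{A}$ --- and re-attaching each planar piece $G_a$ at the image of $\bar a$ --- gives a distribution over planar graphs, since $1$-sums preserve planarity. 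Composing the two stochastic embeddings then produces the required mapping from $G$ into a distribution over planar graphs.

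The main point to verify is that this composition keeps the distortion at $O(g^2)$ rather than a larger value. For a pair $x \in G_a$, $y \in G_b$, the Peeling Lemma already bounds their expected $\peel{G}{A}$-distance by $O(g^2) \cdot d_G(x,y)$, and every such path crosses the $H$-copy, so the tree substitution can in principle multiply the $H$-segment by the tree-embedding distortion of $H$. Exploiting the low essential complexity of $H$ (only $O(g)$ branch vertices, together with the structural constraints that make the relevant embedding of~\cite{GNRS-journal} applicable) ensures this extra factor is absorbed into the constant, yielding the claimed $O(g^2)$ bound overall.
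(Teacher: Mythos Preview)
Your outline follows the paper's strategy exactly: take a minimum-length cut graph $H$, bound $\dil_G(V(H))=O(g)$, apply the Peeling Lemma, then replace the $H$-copy in $\peel{G}{A}$ by a random tree using the $O(g)$ Euler-number bound. Two points deserve attention.

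First, a minor point: the sentence ``cutting $\mathbb{S}$ along $H$ yields a disk in which $G_a$ embeds'' is not quite right. The vertex $a$ typically appears at several points on the boundary of the disk, so $G_a=G[(V\setminus A)\cup\{a\}]$ is not literally drawn there. One can still argue planarity (contract the boundary circle to a point, obtaining a sphere), but the paper instead sidesteps this by subdividing every edge from $V(C)$ to $V\setminus V(C)$ and enlarging $A$ to include the new midpoints; then each $a$ in the enlarged set has at most one edge into the complement, making planarity of the pieces immediate.

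Second, and this is the real gap: your final paragraph does not establish the $O(g^2)$ bound. Composing a stochastic $O(g^2)$-embedding with a stochastic $O(\log g)$ tree embedding of $H$ gives, a priori, only $O(g^2\log g)$; saying the extra factor is ``absorbed into the constant'' is not an argument. The paper closes this gap by invoking the \emph{deterministic} guarantee \eqref{eq:dilate} of the Peeling Lemma: for $x,y\in A$ one has $d_{\peel{G}{A}}(F(x),F(y))\le \dil_G(A)\,d_G(x,y)=O(g)\,d_G(x,y)$, not merely $O(g^2)$ in expectation. Hence an edge with both endpoints in $A$ is stretched $O(g)$ by the Peeling step and then $O(\log g)$ by the tree step, for a total of $O(g\log g)$. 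Edges with an endpoint outside $A$ are then analyzed separately. Without using \eqref{eq:dilate} (which you never mention), the bound you actually obtain is $O(g^2\log g)$.
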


\begin{proof}
Let $G$ be embedded into a surface $\mathbb S$ of orientable or non-orientable genus $g \geq 1$.
Let $C$ be the cut graph given by Theorem \ref{thm:EPcutting}, which has at most $4g-2$ vertices
of degree at least 3.  Applying Lemma \ref{lemma:dilation-planarizing-upper}, we have
\begin{equation}\label{eq:dil}
\dil_G(V(C))\leq 4g.
\end{equation}
Since cutting $\mathbb S$ along $C$ gives a space homeomorphic to a
disk, it follows that $G\setminus C$ is planar.  $G \setminus C$ might
be disconnected, but this does not affect the argument.  See Figure
\ref{fig:dist1}.

Assume, without loss of generality,
that the minimum distance in $G$ is 1. (Since $G$ is finite, the distances
can always be rescaled to satisfy this constraint.)
Let $J$ be the graph obtained from $G$ as follows.
For every edge $\{u,v\}\in E(G)$, with $u\in V(C)$, and $v\notin V(C)$, we introduce a new vertex $z$, and we replace $\{u,v\}$ by a path $u$-$z$-$v$, with $d_J(z,v)=\frac12$, and $d_J(u,z)=d_G(u,v)-\frac12$.
Let $Z$ be the set of all these new vertices, and let $K=J[V(C)\cup Z]$, i.e. $K$ is the subgraph obtained from $C$ by adding all the new vertices in $Z$, and all the edges between $Z$ and $V(C)$.
Observe that for any $x,y\in V(G)$,
\[
d_J(x,y)=d_G(x,y).
\]
Therefore $G$ is isometric to $J$.
For any $x,y\in V(K)$, let $x',y'$ be the nearest neighbors of $x$ and $y$ in $V(C)$, respectively.
Since the minimum distance in $J$ is at least $\frac12$, we have
\begin{eqnarray*}
d_K(x,y) & = & d_K(x',y') + d_K(x,x') + d_K(y,y')\\
& \leq & 1 + d_C(x',y')\\
& \leq & 1 + 4g\, d_G(x',y')\mbox{ by Equation~\eqref{eq:dil}}\\
& = & 1+4g\, d_J(x',y') \\
& \leq & 1+4g\, (d_J(x,y) + d_J(x,x') + d_J(y,y')) \\
& \leq & 1+4g+4g\, d_J(x,y) \\
& \leq & (4g + 2(4g+1)) d_J(x,y) \\
& \leq & 14g\, d_J(x,y)
\end{eqnarray*}
Therefore,
\begin{equation}\label{eq:dil_K}
\dil_J(V(K)) \leq 14 g.
\end{equation}
Since $G$ is a graph of genus $g>0$, it follows that the modulus of decomposability of $J\setminus K$ is
\begin{equation}\label{eq:modulus_J_K}
\beta(J\setminus K)=\beta(G\setminus C)=O(g)
\end{equation}
by Theorem \ref{thm:kpr}.

Thus, by the Peeling Lemma and by (\ref{eq:dil_K}) and (\ref{eq:modulus_J_K}), we obtain that $J$ can be embedded into a distribution ${\cal F}$ over graphs obtained by 1-sums of $K$ with copies of $\{J[V(J)\setminus V(K) \cup\{a\}]\}_{a\in V(K)}$, with distortion at most $O(\beta(J\setminus K) \cdot \dil_J(V(K))) = O(g^2)$.
Observe that $J \setminus K = G\setminus C$, and thus $J\setminus K$ is a planar graph.
Moreover, for any $a\in V(K)$, there is at most one edge between $a$ and $J\setminus K$, and thus the graph $J[V(J)\setminus V(K) \cup\{a\}]$ is planar.
In other words, any graph in the support of ${\cal F}$ is obtained by 1-sums between $K$ and several planar graphs.

It remains to planarize $K$.  We observe that for pairs $x,y \in K$,
we have $$d_{G[K]}(x,y) \leq \dil_G(V(K))\,d_G(x,y) = O(g)\,d_G(x,y),$$ by
\eqref{eq:dilate} in the Peeling Lemma.
We  will embed $K$ into a random tree with distortion $O(\log g)$,
yielding an embedding of $G$ into planar graphs with total distortion
at most $O(g^2)$ (in fact, pairs in $K$ are stretched by only $O(g \log g)$ in expectation,
by the final assertion of the Peeling Lemma).

By Theorem \ref{thm:EPcutting}, the graph $C$ is the subdivision of a graph $C'$ with at most $4g-2$ vertices and at most $6g-3$ edges.
Recall that for a graph $\Gamma=(V,E)$, its Euler number is defined to be $\chi(\Gamma)=|E(\Gamma)| - |V(\Gamma)|+1$.
Clearly, the Euler number of a graph does not change by taking subdivisions, so we have
\begin{equation}\label{eq:euler}
\chi(C)=\chi(C') < |E(C')| - |V(C')| + 1 < 6 g.
\end{equation}
Gupta~{\em et~al.}~proved that any graph with Euler number
$\chi$ embeds into a dominating tree distribution with distortion
$O(\log \chi)$ (Theorem~5.5~\cite{GNRS-journal}).
Therefore by (\ref{eq:euler}) we obtain that $C$ can be embedded into a distribution ${\cal D}$ over dominating trees with distortion $O(\log{g})$.
Let $T$ be a random tree sampled from this distribution.
Let $T'$ be the graph obtained from $K$ by replacing the isometric copy of $C$ in $K$ by $T$.
Observe that every vertex $w\in V(K)\setminus V(C)$ is connected to a single vertex in $V(C)$, and has no other neighbors in $K$.
Therefore, the graph $T'$ is a tree with the same distortion as $T$.
It follows $K$ can also be embedded into a distribution ${\cal D}'$ over trees with distortion $O(\log g)$.

We are now ready to describe the embedding of $G$ into a random planar graph.
We embed $G$ into a random graph $W$ chosen from ${\cal F}$.
Recall that $W$ is obtained by 1-sums of a single copy of $J$ with multiple planar graphs.
Next, we embed $J$ into a random tree $Q$ chosen from ${\cal D}'$, and we replace the isometric copy of $J$ in $W$ by $Q$.
Let $R$ be the resulting graph, illustrated in Figure \ref{fig:dist2}.

\medskip

It is easy to see (by the triangle inequality and linearity of expectation), that the expansion of our random mapping
is worst for pairs $x,y \in V(G)$ with $(x,y) \in E(G)$.  For such an edge, if $x,y \in K$, we see
that the total expected stretch is $O(g \log g)$, while if one of $x$ or $y$ is not in $K$, then
the total expected stretch is at most $O(g^2)$ by our previous arguments.

This results in a stochastic $O(g^2)$-embedding of $G$ into $R$.
Moreover, the graph $R$ is the 1-sum of a tree with planar graphs.
Since the class of planar graphs is closed under 1-sums, it follows that $R$ is planar, concluding the proof.
\end{proof}

\begin{figure*}
\begin{center}
\subfigure[The graph $G$ with with the planarizing set $C$, and the graph $G\setminus C$.]{
\includegraphics{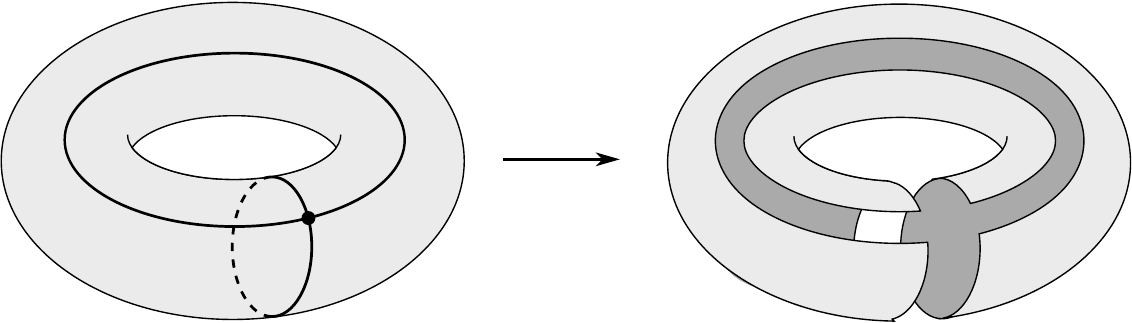}
\label{fig:dist1}
}
\hspace{1cm}
\subfigure[The 1-sum of $K$ with copies of $J\setminus K$, and the resulting graph obtained after replacing $K$ with a tree. Note that the figure is somewhat misleading as the tree is not necessarily a subtree of $K$.\label{fig:dist2}]{
\includegraphics{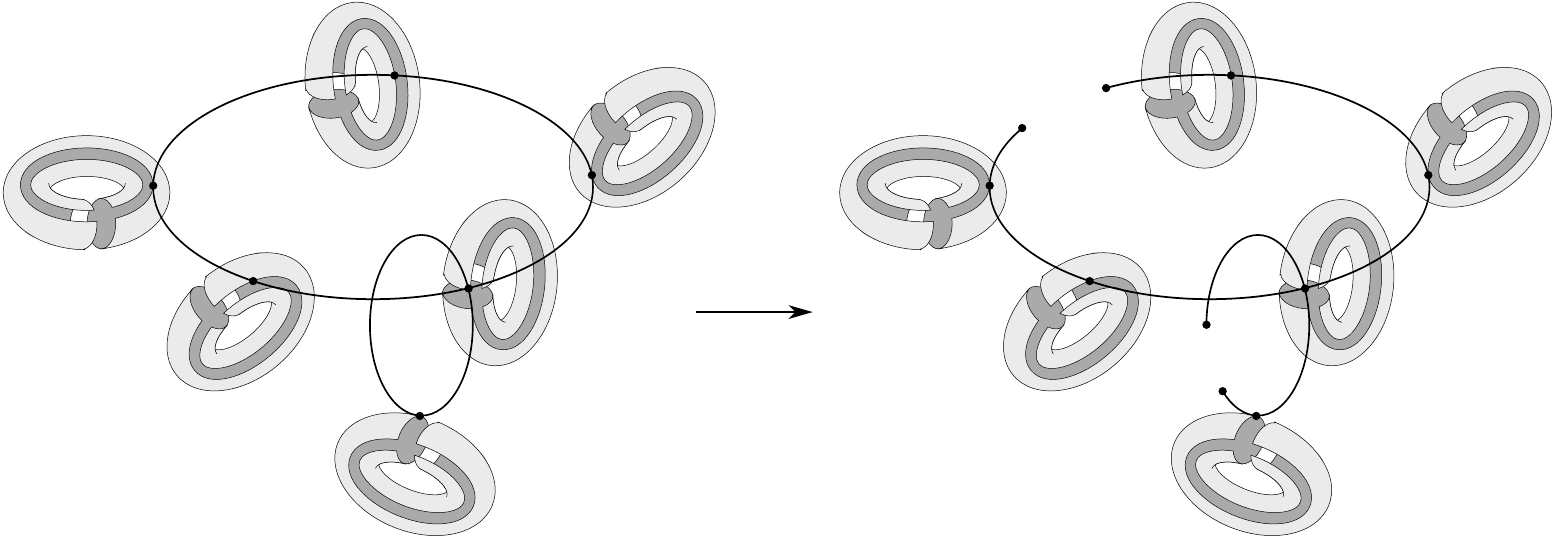}
}
\end{center}
\caption{Computing the stochastic embedding.}
\end{figure*}

\section{Lower bounds}
\label{sec:lower}

\subsection{The dilation of planar planarizing sets}
\label{sec:lowerdilate}
Given the exponential gap on the optimal distortion of a stochastic embedding of a genus-$g$ graph into a distribution over planar graphs ($O(g^2)$ vs. $\Omega(\log g)$), it is natural to ask whether the $O(g^2)$ bound on the distortion of our embedding is tight.
It is easy to construct examples of graphs of genus $g$ where a cut graph of minimum total length has dilation $\Omega(g)$.
In this case, our embedding clearly has distortion at least $\Omega(g)$, e.g. on the vertices of the planarizing set.

We can in fact show the following lower bound:
there are graphs of genus $g$ such that \emph{any} planar planarizing set has dilation $\Omega(g)$.
This implies that any algorithm that first computes a planar planarizing set $A$, and then outputs a stochastic embedding of 1-sums of $A$ with $G\setminus A$ using the Peeling Lemma, has distortion at least $\Omega(g)$.

\begin{theorem}
For any $g>0$, and for all sufficiently large $n$, there exists an $n$-vertex graph $G$ of genus $g$, such that for any planar subgraph $H$ of $G$, and $G\setminus H$ is planar, we have that the dilation of $H$ is at least $\Omega(g)$.
\end{theorem}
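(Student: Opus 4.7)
The plan is to build, for each $g \geq 1$ and every sufficiently large $n$, an $n$-vertex graph $G = G_g$ of orientable genus $g$ in which every planar subgraph $H$ with $G\setminus H$ planar satisfies $\dil_G(V(H)) = \Omega(g)$. My proposed construction is a ``necklace of non-planar gadgets'': take $g$ disjoint copies $\Gamma_1,\ldots,\Gamma_g$ of a small non-planar graph (say $K_{3,3}$), one placed on each of the $g$ handles of a sphere, together with an equatorial cycle of length $\Theta(g)$ that visits a designated ``port'' vertex $u_i$ of each $\Gamma_i$. In addition, $G$ carries a ``shortcut'' subgraph $S$ whose edges are drawn through the handles and provide $d_G(u_i,u_j)=O(1)$ for all $i,j$---for instance, a single hub vertex $c$ with a star of $g$ spokes, each spoke routed through a different handle. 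The key design feature is that $S$, combined with any selection of edges that $H$ is forced to take inside the $\Gamma_i$, must contain a $K_{3,3}$ or $K_5$ minor.

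The argument then proceeds in three steps. First, any planar planarizing $H$ must contain at least one edge of each $\Gamma_i$, for otherwise $\Gamma_i$ survives intact in $G\setminus H$ and planarity of $G\setminus H$ fails. Second, the planarity of $H$ together with these forced gadget-cuts rules out $H$ containing too many edges of the shortcut $S$: each retained shortcut edge effectively identifies two vertices across distant gadgets, and only a constant number of such identifications can survive in a planar graph without producing a forbidden $K_{3,3}$ or $K_5$ minor, by an Euler-formula argument applied to $H$. Third, choose indices $i,j$ with $|i-j|=\Theta(g)$ around the necklace, and let $x,y$ be vertices in $V(H)\cap V(\Gamma_i)$ and $V(H)\cap V(\Gamma_j)$ respectively (guaranteed by the first step). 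Then $d_G(x,y)=O(1)$ using $S$ in $G$, while $d_{G[V(H)]}(x,y) = \Omega(|i-j|) = \Omega(g)$, since the only route available inside $H$ from $\Gamma_i$ to $\Gamma_j$ is along the equatorial cycle.

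The main obstacle is the second step, which is the combinatorial core of the argument: the shortcut $S$ and the gadgets $\Gamma_i$ must be designed so that planarity of $H$ genuinely forbids retaining the shortcut together with the forced gadget-cuts. Because planarity is an abstract minor-closed property rather than a property of the surface embedding, this design must ensure that surviving shortcut edges produce explicit $K_{3,3}$ or $K_5$ minors inside $H$ when joined to the planarizing cuts in the $\Gamma_i$. A natural way to arrange this is to make the shortcut attach to each $\Gamma_i$ at two distinguished vertices whose planarizing separation forces a branch in the minor, so that $g$ surviving shortcut edges together with the forced cuts yield a $K_{3,3}$-subdivision. Once this topological-combinatorial step is in place, the genus computation for $G$ and the distance computation in the third step are routine, and the $\Omega(g)$ bound on $\dil_G(V(H))$ follows.
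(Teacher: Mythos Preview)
Your proposal has a genuine gap in the ``second step,'' and in fact the construction as described admits a planar planarizing set of dilation $O(1)$. Take $H$ to be the star $S$ (hub $c$ with spokes to $u_1,\dots,u_g$) together with one edge $e_i$ from each gadget $\Gamma_i$. This $H$ is a forest, hence planar, and $G\setminus H$ consists of the equatorial cycle with the planar pieces $\Gamma_i\setminus e_i$ hanging off it, which is also planar. But $H$ contains all of $S$, so $d_{G[V(H)]}(x,y)=O(1)$ whenever $d_G(x,y)=O(1)$, and the dilation bound collapses. Your suggested fix (attach $S$ to each $\Gamma_i$ at two ports $u_i,v_i$) does not help: $H$ is only forced to contain \emph{some} edge of $\Gamma_i$, not a $u_i$--$v_i$ path, so no $K_{3,3}$ minor is forced inside $H$. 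The difficulty is structural: you are trying to make planarity of $H$ \emph{exclude} the shortcuts, but a star plus scattered edges is always planar, so this direction cannot work without a much more intricate gadget.

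The paper takes the opposite tack: rather than excluding shortcuts from $H$, it forces $H$ to \emph{contain} a structure that no planar graph can host with small dilation. Concretely, the ambient surface is a fattened $K_5$ (spheres for vertices, long thin tubes for edges) with $g-1$ tiny handles sprinkled uniformly along the tubes; $G$ is a bounded-degree triangulation of this surface. Any planarizing $H$ must touch every handle, so $V(H)$ contains points spaced $\Theta(n/g)$ apart along every tube. These points realize, up to constant distortion in $d_G$, a $\Theta(g)$-subdivision $J$ of $K_5$. A lemma of Carroll et al.\ then says any embedding of $J$ into a $K_5$-minor-free graph has distortion $\Omega(g)$; since $H$ is planar and $V(J)\subseteq V(H)$, this forces $\dil_G(V(H))=\Omega(g)$. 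The point is that the non-planarity is built into the \emph{global shape} of the surface rather than into local gadgets that $H$ might cut cheaply.
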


\begin{proof}
Let $S$ be a surface obtained from $K_5$ after replacing each vertex by a 3-dimensional sphere of radius 1, and each edge by a cylinder of length $n$ and radius 1.
Let also $S'$ be the surface obtained after attaching $g-1$ handles that are uniformly spread along $S$ (Figure \ref{fig:K5-tori}).
The diameter of each handle is $1$.
Clearly, the genus of $S'$ is $g$.
Observe that the minimum distance between any two such handles is $\Theta(n/g)$.
It is easy to see that we can triangulate $S'$ using triangles of
edge-length $\Theta(1)$, such that the set of vertices of the
triangulation is a $\Theta(1)$-net of $S'$ of size $n$.  Let $G$ be
the graph defined by this triangulation.

\begin{figure}[ht]
\begin{center}
\scalebox{0.5}{\includegraphics{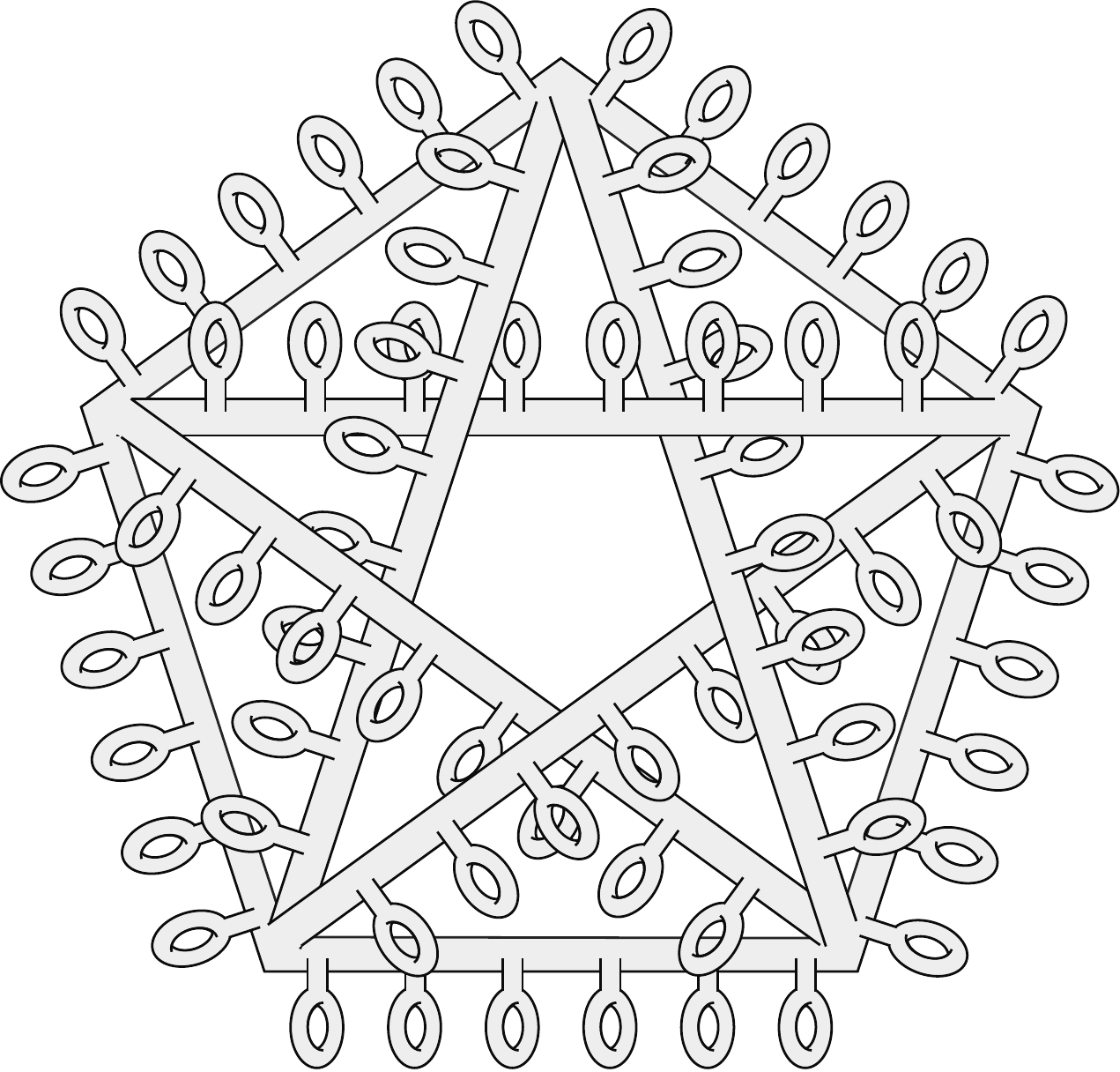}}
\caption{Tori on a surface corresponding to $K_5$.\label{fig:K5-tori}}
\end{center}
\end{figure}

Let now $H$ be a planar subgraph of $G$, such that $G\setminus H$ is also planar.
Since $G\setminus H$ is planar, it follows that $H$ contains at least one vertex in each handle that we added in $S$.
Let $J$ be the unweighted graph obtained after replacing each edge of a $K_5$ by a path of length $\frac{g}{100}$.
It follows that we can embed $J$ into $G$ with distortion $\Theta(1)$, such that the image of $V(J)$ is contained in $V(H)$.
Lemma 1 of \cite{Carroll-minor} states that any embedding of a graph $\Gamma_1$ containing a metric copy of the $k$-subdivision of a graph $\Gamma_2$, into a graph $\Gamma_3$ that excludes $\Gamma_2$ as a minor, has distortion $\Omega(k)$.
Therefore, any embedding of $J$ into a planar graph must have distortion $\Omega(g)$.
Therefore, the dilation of $H$ is $\Omega(g)$.
\end{proof}

\subsection{Lower bounds for randomly planarizing a graph}

We now prove the following lower bound.

\begin{theorem}
For every $g \geq 1$, there exists a metric graph $G=(V,E)$ of orientable genus $O(g)$
such that if $G$ admits a stochastic $D$-embedding into a distribution
over planar graph metrics, then $D = \Omega(\log g)$.
\end{theorem}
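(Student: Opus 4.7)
The plan is to use the Klein--Plotkin--Rao decomposability modulus $\beta$ from Definition~\ref{def:lrp} as the key obstruction: planar graphs have $\beta = O(1)$ (Theorem~\ref{thm:kpr} with $g = 0$), while a bounded-degree expander on $\Theta(g)$ vertices has $\beta = \Omega(\log g)$, and any stochastic embedding transports the $\beta$-bound up to the distortion factor.

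Concretely, I take $G$ to be the shortest-path metric on a $3$-regular expander on $n = \Theta(g)$ vertices with unit edge weights. Since $G$ has only $O(g)$ edges, Euler's formula gives orientable genus at most $|E(G)| - |V(G)| + 1 = O(g)$, which meets the genus requirement.

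The first main step is to show that $\beta_{(V(G), d_G)} = \Omega(\log g)$. For this I appeal to the standard inequality that any $n$-point metric $X$ admits an embedding into $\ell_2$ with distortion $O(\sqrt{\beta_X \log n})$, obtained via Fr\'echet-type embeddings summed over Lipschitz decompositions at all dyadic scales (in the spirit of Rao). Combined with the Linial--London--Rabinovich $\Omega(\log n)$ lower bound on the $\ell_2$-distortion of a constant-degree expander, this forces $\beta_G = \Omega(\log g)$.

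The second main step is a push-forward of Lipschitz decompositions through a stochastic embedding. Suppose $F : V(G) \to V(H_F)$ is a stochastic $D$-embedding into a distribution over planar graphs. Sampling $F$ first, then sampling a $(O(1), \Delta)$-Lipschitz partition $P$ of the planar graph $H_F$, and finally pulling back to $G$ by taking preimages yields a random partition of $G$ at scale $\Delta$. Since $F$ is non-contracting, $d_G(x, y) \leq d_{H_F}(F(x), F(y))$, so the pulled-back partition remains $\Delta$-bounded, and
\[
 \Pr[P(F(x)) \neq P(F(y))] \;\leq\; \frac{O(1)}{\Delta}\,\mathbb{E}_F\!\left[d_{H_F}(F(x), F(y))\right] \;\leq\; \frac{O(D)}{\Delta}\,d_G(x, y).
\]
Hence $\beta_G \leq O(D)$, which combined with the first step yields $D = \Omega(\log g)$. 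The main obstacle is cleanly establishing $\beta_G = \Omega(\log g)$ for the expander by chaining the decomposability-to-$\ell_2$ embedding theorem with the expander distortion lower bound; the push-forward of Lipschitz decompositions is essentially dual to the argument powering the Peeling Lemma in Section~\ref{sec:peeling}, and is routine given the framework.
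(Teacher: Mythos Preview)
Your proposal is correct and follows essentially the same approach as the paper: both arguments pull back Lipschitz random partitions through the stochastic embedding to show $\beta_{(V,d_G)} \leq O(D)$, invoke Theorem~\ref{thm:kpr} at $g=0$ for the planar side, and use a constant-degree expander as the source metric with $\beta = \Omega(\log n)$. The only differences are cosmetic: the paper bounds the genus more crudely by $O(n^2)$ (any $n$-point metric is a shortest-path metric on at most $\binom{n}{2}$ edges) rather than your tighter $O(n)$ via the Euler bound for a $3$-regular graph, and the paper simply asserts $\beta = \Omega(\log n)$ for expanders rather than deriving it from Rao's $O(\sqrt{\beta \log n})$ $\ell_2$-embedding combined with the Linial--London--Rabinovich lower bound---your route is valid but a direct edge-counting argument at scale $\Delta \approx \tfrac{1}{2}\log n$ is shorter.
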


\begin{proof}
If is easy to check that if a metric space $(X,d)$ admits a stochastic $D$-embedding $F$ into a family
$\mathcal Y$
of metric spaces, then the modulus of decomposability satisfies $\beta_{(X,d)} \leq D \cdot \sup_{(Y,d') \in \mathcal Y} \beta_{(Y,d')}$.
One simply takes a random $(\Delta,\beta)$-Lipschitz partition of $(Y,d')$, and pulls back the partition
to $X$ under $F$.  This yields a random partition of $X$ which is also $\Delta$-bounded, since
$F$ is non-contracting.  Observe that for $x,y \in X$, the probability they are separated
in the resulting partition is at most the expectation of $\beta \frac{d'(F(x),F(y))}{\Delta},$
which is at most $\beta D \frac{d(x,y)}{\Delta}$ by the property that $F$
is a stochastic $D$-embedding.

If $\mathcal Y$ is the family of planar graph metrics, then by Theorem \ref{thm:kpr}, we have $\sup_{(Y,d') \in \mathcal Y} \beta_{(Y,d')} = O(1)$, thus
$\beta_{(X,d)} = O(D)$.  On the other hand, there are $n$-point metric spaces (e.g. the shortest-path metric
on a constant-degree expander graph) which have $\beta_{(X,d)} = \Omega(\log n)$.  Combining this with the fact
that every $n$-point metric space can be represented as the shortest-path metric
of a graph with genus at most $O(n^2)$ yields the desired lower bound.
\end{proof}

\section{Open problems}
The most immediate problem left open by this work is closing the gap between the $O(g^2)$ upper bound, and the $\Omega(\log g)$ lower bound on the distortion.

Moreover, our embedding yields an algorithm with runnning time $n^{O(g)}$.
All the steps of the algorithm can be performed in time $n^{O(1)}\cdot g^{O(1)}$, except for the computation of the minimum-length cut graph, which is NP-complete, and for which the best-known algorithm has running time $n^{O(g)}$ \cite{EP-cutting}.
It remains an interesting open question whether the running time of our algorithm can be improved.
We remark that Erickson and Har-Peled \cite{EP-cutting} also give an algorithm for computing approximate cut graphs, with running time $O(g^2 n \log n)$.
These graphs however can have unbounded (in terms of $g$) dilation, so they don't seem to be applicable in our setting.

Another interesting open problem is whether there exist constant-distortion stochastic embeddings of bounded-genus graphs into planar \emph{subgraphs}.

\subsection*{Acknowledgements}

The authors are thankful for conversations with Jeff Erickson and
Sariel Har Peled.

\section*{Bibliography}

\bibliographystyle{elsarticle-num}
\bibliography{bibfile}

\def\cprime{$'$} \def\cprime{$'$}
\begin{thebibliography}{10}
\expandafter\ifx\csname url\endcsname\relax
  \def\url#1{\texttt{#1}}\fi
\expandafter\ifx\csname urlprefix\endcsname\relax\def\urlprefix{URL }\fi
\expandafter\ifx\csname href\endcsname\relax
  \def\href#1#2{#2} \def\path#1{#1}\fi

\bibitem{Baker-planar}
B.~S. Baker, Approximation algorithms for {NP}-complete problems on planar
  graphs, J. ACM 41~(1) (1994) 153--180.

\bibitem{Borradaile-thesis}
G.~Borradaile, Exploiting planarity for network flow and connectivity problems,
  Ph.D. thesis, Brown University (2008).

\bibitem{BDT09}
G.~Borradaile, E.~Demaine, S.~Tazari, Polynomial-time approximation schemes for
  subset-connectivity problems in bounded genus graphs, in: Proceedings of
  STACS, 2009, pp. 171--182.

\bibitem{Bar96}
Y.~Bartal, Probabilistic approximation of metric spaces and its algorithmic
  applications, in: 37th Annual Symposium on Foundations of Computer Science
  (Burlington, VT, 1996), IEEE Comput. Soc. Press, Los Alamitos, CA, 1996, pp.
  184--193.

\bibitem{IS07}
P.~Indyk, A.~Sidiropoulos, Probabilistic embeddings of bounded genus graphs
  into planar graphs, in: Proceedings of the 23rd Annual Symposium on
  Computational Geometry, ACM, 2007, pp. 204--209.

\bibitem{EP-cutting}
J.~Erickson, S.~Har-Peled, Optimally cutting a surface into a disk, Discrete \&
  Computational Geometry 31~(1) (2004) 37--59.

\bibitem{LS08}
J.~Lee, A.~Sidiropoulos, On the geometry of graphs with a forbidden minor,
  Proceedings of the 41st Annual ACM Symposium on Theory of Computing (2009)
  245--254.

\bibitem{KPR93}
P.~N. Klein, S.~A. Plotkin, S.~Rao, Excluded minors, network decomposition, and
  multicommodity flow, in: Proceedings of the 25th Annual ACM Symposium on
  Theory of Computing, 1993, pp. 682--690.

\bibitem{GNRS-journal}
A.~Gupta, I.~Newman, Y.~Rabinovich, A.~Sinclair, Cuts, trees and
  $\ell_1$-embeddings of graphs, Combinatorica 24~(2) (2004) 233--269.

\bibitem{FRT-journal}
J.~Fakcharoenphol, S.~Rao, K.~Talwar, A tight bound on approximating arbitrary
  metrics by tree metrics, J. Comput. Syst. Sci. 69~(3) (2004) 485--497.

\bibitem{MoharT-book}
B.~Mohar, C.~Thomassen, Graphs on Surfaces, John Hopkins, 2001.

\bibitem{I-survey}
P.~Indyk, Tutorial: Algorithmic applications of low-distortion geometric
  embeddings, Annual Symposium on Foundations of Computer Science.

\bibitem{Matousek-book}
J.~Matousek, Lectures on Discrete Geometry, Springer, 2002.

\bibitem{Rao99}
S.~Rao, Small distortion and volume preserving embeddings for planar and
  {E}uclidean metrics, in: Proceedings of the 15th Annual Symposium on
  Computational Geometry, ACM, New York, 1999, pp. 300--306.

\bibitem{FT03}
J.~Fakcharoenphol, K.~Talwar, An improved decomposition theorem for graphs
  excluding a fixed minor, in: Proceedings of 6th Workshop on Approximation,
  Randomization, and Combinatorial Optimization, Vol. 2764 of Lecture Notes in
  Computer Science, Springer, 2003, pp. 36--46.

\bibitem{LNinvent}
J.~R. Lee, A.~Naor,
  \href{http://dx.doi.org/10.1007/s00222-004-0400-5}{Extending {L}ipschitz
  functions via random metric partitions}, Invent. Math. 160~(1) (2005) 59--95.
\newblock \href {http://dx.doi.org/10.1007/s00222-004-0400-5}
  {\path{doi:10.1007/s00222-004-0400-5}}.
\newline\urlprefix\url{http://dx.doi.org/10.1007/s00222-004-0400-5}

\bibitem{Carroll-minor}
D.~E. Carroll, A.~Goel, Lower bounds for embedding into distributions over
  excluded minor graph families., in: Proceedings of the European Symposium on
  Algorithms, Vol. 3221 of Lecture Notes in Computer Science, Springer, 2004,
  pp. 146--156.

\end{thebibliography}

\end{document}